  \newcommand{\oo}{\infty}
  \newcommand{\del}{\partial}
\renewcommand{\d}{\mathrm{d}}
\renewcommand{\dh}{\mathrm{d}_{\mathsf{h}}}
  \newcommand{\dv}{\mathrm{d}_{\mathsf{v}}}
  \newcommand{\sso}{\subset}
  \newcommand{\sse}{\subseteq}
  \newcommand{\B}{\mathcal{B}}
  \newcommand{\E}{\mathcal{E}}
  \newcommand{\EL}{\mathrm{EL}}
\renewcommand{\L}{\mathcal{L}}
  \newcommand{\Secs}{\mathrm{\Gamma}}
  \newcommand{\Forms}{\mathrm{\Omega}}
  \newcommand{\Lie}{\mathscr{L}}
\theoremstyle{plain}
\newtheorem{theorem}{Theorem}
\newtheorem{lemma}[theorem]{Lemma}
\newtheorem{proposition}[theorem]{Proposition}
\theoremstyle{definition}
\newtheorem{definition}{Definition}
\title{Presymplectic current and the inverse problem of the calculus of variations}
\author{Igor Khavkine\\
Institute for Theoretical Physics, Utrecht, Leuvenlaan 4,\\
NL-3584 CE Utrecht, The Netherlands\\
\texttt{i.khavkine@uu.nl}}
\begin{document}
\maketitle
\begin{abstract}
The inverse problem of the calculus of variations asks whether a given
system of partial differential equations (PDEs) admits a variational
formulation. We show that the existence of a presymplectic form in the
variational bicomplex, when horizontally closed on solutions, allows us
to construct a variational formulation for a subsystem of the given PDE.
No constraints on the differential order or number of dependent or
independent variables are assumed. The proof follows a recent
observation of Bridges, Hydon and Lawson and generalizes an older result
of Henneaux from ordinary differential equations (ODEs) to PDEs.
Uniqueness of the variational formulation is also discussed.
\end{abstract}

\section{Introduction}\label{sec:intro}
Many systems of partial differential equations (PDEs for short) that
appear in physics are variational. That is, they are equivalent to
Euler-Lagrange (EL) equations of some local Lagrangian density. The
question of whether some arbitrarily given PDE is variational is known
as the \emph{inverse problem of the calculus of variations}. It has
attracted a significant amount of attention in the
past~\cite{anderson-aspects,saunders}. In physics, a variational
formulation endows the algebra functions on the phase space of a
classical theory with Poisson structure, which in turn determines the
corresponding commutation relations in quantum theory. One important
consequence of the inverse problem is the ability to reconstruct a
variational principle and hence the quantum commutation rules (whether
unique or not) directly from the equations of motion~\cite{wigner,hen82}.

This problem comes in two versions, the harder \emph{multiplier} problem
and the easier \emph{non-multiplier} problem. The idea behind the
solution of the non-multiplier problem for ordinary differential
equations (ODEs) was known already to Helmholtz and the necessary and
sufficient conditions for a positive solution are known as the
\emph{Helmholtz conditions}. The generalization of these conditions to
PDEs is elegantly stated in terms of the variational
bicomplex~\cite{anderson-var,anderson-var2}. The multiplier problem is
much less understood, with significant results obtained for ODEs by
Douglas~\cite{douglas} and for PDEs by Anderson and Duchamp~\cite{ad}.
A criterion certifying a positive solution in the case of second order
ODEs was given later by Henneaux~\cite{hen82}, which was later extended
to higher orders~\cite{juras,at}. This criterion is essentially the
existence of a (pre)symplectic form on the space of dependent variables
that is conserved by the flow of the ODE.

The work of Henneaux proved difficult~\cite{hen84} to generalize to partial
differential equations (PDEs). For one thing, it is not immediately
clear what is the right analog of the symplectic form. Also, Henneaux
represented a second order ODE as a vector field on the tangent bundle
of the configuration space. Finally, the conservation condition was expressed
as annihilation by the Lie derivative of the ODE vector field (with
suitable generalization in the time dependent case). However, we can now
say that Henneaux's argument can in fact be (at least partially)
generalized to the PDE case. The analog of the symplectic form
is the covariant presymplectic current (Section~\ref{sec:symp}). And the
analog of the geometric vector field representation of an ODE is the geometric
representation of a PDE as a submanifold of the jet bundle of the
dependent variables (Section~\ref{sec:jets}).  Within this context, the
corresponding generalization of Henneaux's criterion certifying a
positive solution of the multiplier inverse problem for PDEs was given
only very recently in~\cite{bhl,hydon}. This key observation appeared as a
side remark in that work, with only a sketch of the details and without
placing it within the context of the inverse problem. Unfortunately,
this solution of the inverse problem is only partial, as the Lagrangian
density produced by the proposed procedure (Section~\ref{sec:invprob})
produces a variational system that need not be equivalent to the
original PDE system, though this variational system will admit all
solutions of the original PDE system among its own solutions.

This note aims at highlighting and clarifying the above result by
placing it in the appropriate geometric context. Section~\ref{sec:jets}
introduces some basic background on the geometric formulation of PDEs
using jet bundles and the associated variational bicomplex.
Section~\ref{sec:symp} shows how a conserved presymplectic current
density arises in variational problems. Conversely,
Section~\ref{sec:invprob} shows that the presence of a conserved
presymplectic current density is equivalent to the fact that the
solutions of the given PDE system are also solutions (but perhaps not
the only ones) of a variational PDE system. Section~\ref{sec:uniq}
discusses the non-uniqueness of the Lagrangian constructed in
Section~\ref{sec:invprob}.  Finally, conclusions are in
Section~\ref{sec:concl}.

\section{Jet bundles and PDE systems}\label{sec:jets}
This section briefly defines the basic notions needed for the geometric
formulation of PDE systems in terms of jet bundles, including associated
geometric structures, and fixes some notation. For a detailed discussion
of jet bundles see~\cite{olver,nat-diff}. The variational
bicomplex is discussed in detail in~\cite{anderson-var,anderson-var2}.

Let $F\to M$ be a vector bundle over the base space $M$, an
$n$-dimensional smooth manifold. Since we will be only interested in
local questions, assume that both $F$ and $M$ are topologically trivial
(contractible). The arguments in this paper can be straightforwardly
generalized to smooth bundles with non-trivial fiber and base space
topologies~\cite{sard}. Let $\Secs(F)$ denote the space of smooth
sections of $F$. Denote by $T^*N$ the cotangent bundle for any smooth
manifold $N$ and by $\Lambda^k N = \bigwedge^k_M T^*N$ the bundle of
alternating $k$-forms over $N$. Denote by $\Forms^k(N) = \Secs(\Lambda^k
N)$ the space of differential forms on $N$, by $\d$ the corresponding
de~Rham differential and by $\Forms^*(N)$ the total complex of
differential forms graded by degree.

The $k$-jet $j_x^k\phi$ of a section $\phi\in \Secs(F)$ at $x\in M$ can
be defined as the equivalence class of sections $\psi\in \Secs(F)$ that
have coinciding Taylor polynomials up to and including order $k$ at
point $x$ in any (bundle adapted) local coordinate system. Thus, $k$-jets
are a coordinate invariant way of capturing the derivatives of a section
up to order $k$. Local coordinates $(x^i,u^a)$ on $F$ naturally extend
to local coordinates $(x^i,u^a_I)$, $|I|\le k$ on $J^kF$, where $I$ with
$|I|=l$ is a multi-index $(i_1i_2\cdots i_l)$. The coordinates of the
$k$-jet of $\phi$ at $x$ are
\begin{equation}
	(x^i,u^a,u^a_i,\ldots,u^a_I)(j_x^k\phi)
		= (x^i,\phi^a,\del_i\phi^a, \ldots, \del_I \phi^a)(x), ~~ |I|=k,
\end{equation}
where we have used the shorthand $\del_I = \del_{i_1}\cdots\del_{i_k}$.
These $k$-jets form the space $J_x^kF$, which is a fiber of the vector
bundle $J^kF\to M$ of $k$-jets over $M$. The $0$-jets are identical with
the underlying bundle, $J^0F \cong F$. Jet bundles come with natural
projections $J^kF\to J^lF$, for any $k\ge l$, which simply discard the
information about all derivatives higher than $l$. This projection gives
$J^kF$ the structure of an affine bundle over $J^lF$. The projective
limit
\begin{equation}
	J^\oo F = \varprojlim_k J^kF \to \cdots
		\to J^{2}F \to J^{1}F \to J^{0}F \to M
\end{equation}
is called the $\oo$-jet bundle. A smooth function on $J^\oo F$ is the
pullback of a smooth function on some $J^kF$ for some $k<\oo$. That is,
a smooth function on $J^\oo F$ always depends only on finitely many
components of an $\oo$-jet given to it as argument.

A section $\phi\in \Secs(F)$ naturally gives rise to the section
$j^k\phi\colon M \to J^kF$, with $j^k\phi(x) = j_x^k\phi$, called the
$k$-jet prolongation of $\phi$. Similarly, a (not necessarily linear)
bundle morphism $f\colon J^kF\to E$ (which determines a differential
operator $f\circ j^k \colon \Secs(F) \to \Secs(E)$ of order $k$),
naturally gives rise to a bundle morphism $p^l f\colon J^{k+l}F \to J^l
E$, with $p^lf(j_x^{k+l}\phi) = j_x^l(f\circ j^k(\phi))$, called the
$l$-prolongation of $f$.

In the geometric formulation, a PDE system on $F$ of order $k$ is a
submanifold $\iota\colon \E\sso J^kF$ that satisfies the regularity
conditions of being closed and that $\E\to M$ is a smooth sub-bundle of
$J^kF\to M$.  To connect with the usual notion of a PDE, we note that
there always exists (at least up to possible global topological
obstructions~\cite[\textsection 7]{goldschmidt}) a vector bundle $E\to M$ and a bundle
morphism $f\colon J^kF \to E$ such that a section $\phi\in \Secs(F)$
satisfies the system of differential equations $f\circ j^k(\phi) = 0$
iff the image of $j^k\phi\in \Secs(J^kF)$ is contained in $\E$, that is
$j^k\phi$ is actually a section of $\E\to M$. We call the pair $(f,E)$
an equation form of $\E$, which is in general not unique. The equation
form can always be chosen to be regular, which means that any smooth
function $L$ on $J^kF$ vanishes on $\E$ iff it can be written, in local
coordinates $(x^i,u^a)$ and $(x^i,v^B)$ on $F$ and $E$, as $L(x^i,u_I^a)
= L_B(x^i,u_I^a) f^B(x^i,u_I^a)$, with smooth coefficients $L_B$.
Conversely, any regular equation form $(f,E)$ defines a PDE system
$\E_f$ given by the zero set of $f$. The $l$-prolongation $\iota_l
\colon \E^l\sso J^{k+l}F$ is defined as the PDE system $\E_{p^l f}$
corresponding to the equation form $(p^l f,J^lE)$. The $l$-prolongation
comes with with a natural projection $p_l\colon\E^l\to \E$, which simply
restricts the appropriate projection of jet bundles. Note that $p_l$
need not be surjective if the PDE system has non-trivial integrability
conditions. Below, we deal with $\oo$-prolongations $\iota_\oo\colon
\E^\oo\sso J^\oo F$ and $(p^\oo f,J^\oo E)$. We assume that both are
regular.

The de~Rham differential acting on $\Forms^*(J^\oo F)$ can be naturally
written as the sum
\begin{equation}
	\d = \dh + \dv
\end{equation}
of the, respectively, horizontal and vertical differentials. Each is
nilpotent and they anticommute:
\begin{equation}
\dh^2 = \dv^2 = 0 ,
	\quad
\dh\dv + \dv\dh = 0 .
\end{equation}
If $\d_M$ is the de~Rham differential on $M$, $\phi\in \Secs(F)$ is any
section and $L\in\Forms^*(J^\oo F)$ any differential form, then the
defining property of the horizontal differential is that it is
intertwined with $\d_M$ via the pullback of differential forms along
sections:
\begin{equation}
	(j^\oo\phi)^* \dh L = \d_M (j^\oo\phi)^* L .
\end{equation}
Differential forms on $J^\oo F$ have the following natural subspaces:
the purely horizontal forms $\Forms^{h,0}(F)$ that are generated by
pullbacks of forms from $\Forms^h(M)$ along the natural projection
$J^\oo F\to M$ and the purely vertical forms $\Forms^{0,v}(F)\sso
\Forms^v(J^\oo F)$ that are annihilated by the pullback $(j^\oo \phi)^*$
of any section $\phi\colon M\to F$. Purely horizontal and purely
vertical forms generate $\Forms^*(J^\oo F)$ as a graded algebra. The
subspaces of homogeneous horizontal and vertical degrees are denoted by
$\Forms^{h,v}(F)\sso \Forms^{h+v}(J^\oo F)$. The differentials $\dh$ and
$\dv$ are then, respectively, of purely horizontal degree $1$ and of
purely vertical degree $1$. In local coordinates $(x^i,u^a_I)$,
horizontal forms are generated by $\dh x^i = \d x^i$, while vertical
forms are generated by $\dv u^a_I$. The total bi-differential bi-graded
algebra $(\Forms^{*,*}(F),\dh,\dv)$ is called the variational bicomplex
of the vector bundle $F\to M$. Within this bicomplex, we can define the
horizontal, $H^{*,*}(\dh)$, and vertical, $H^{*,*}(\dv)$, cohomology
groups in the obvious way.

The horizontal and vertical degrees, as well as differentials, survive
restriction to the $\oo$-prolonged PDE system $\iota_\oo\colon \E^\oo
\sso J^\oo F$. Thus, the differential forms $\Forms^*(\E^\oo)$ can also
be given the structure of a bi-differential bi-graded algebra
$(\Forms^{*,*}_\E(F),\dh^\E,\dv^\E)$, where $\Forms^{*,*}_\E(F) \cong
\Omega^*(\E^\oo)$. The cohomology groups $H_\E^{*,*}(\dh)$ are called
the characteristic cohomology groups of the PDE system $\E$.
Characteristic cohomology classes can be identified with important
geometric and algebraic properties of $\E$. For instance, a
representative of an element of $H^{n-1,0}_\E(\dh)$ is a non-trivial
conservation law of $\E$~\cite{bbh,bg,kv}.

\section{Covariant presymplectic current}\label{sec:symp}
With notation following Section~\ref{sec:jets}, a \emph{local action
functional} of order $k$ on $F$ is a function $S[\phi]$ of sections
$\phi\in \Secs(F)$,
\begin{equation}
	S[\phi] = \int_M (j^k\phi)^*\L,
\end{equation}
where the \emph{Lagrangian density} $\L\in \Forms^{n,0}(F)$ is a purely
horizontal $n$-form on $J^k F$. It is local because, given a section
$\phi$ and local coordinates $(x^i,u^a_I)$ on $J^kF$, the pullback at
$x\in M$ can be written as 
\begin{equation}
	((j^k\phi)^*\L)(x) = \L(x^i,\del_I\phi^a(x)) ,
\end{equation}
which depends only on $x$ and on the derivatives of $\phi$ at $x$ up to
order $k$. The integral over $M$ will be considered formal, since all
the necessary properties will be derived from $\L$. Incidentally the
usual variational derivative of variational calculus can be put into
direct correspondence with the vertical differential $\dv$ on this
complex, which is how the name \emph{variational bicomplex} was
established.

In Section~\ref{sec:jets} we introduced the variational
bicomplex $(\Forms^{h,v}(F),\dh,\dv)$ of vertically and horizontally
graded differential forms on $J^\oo F$. A Lagrangian density, being of
top horizontal degree, is then a closed element of $\Forms^{n,0}(F)$.
Usually, Lagrangian densities are considered equivalent if they differ
by a horizontally exact term, for example $\L$ and $\L + \dh\B$, where
$\dh\B$ is often referred to as a \emph{boundary term}. In other words,
we should think of $\L$ not just as an element of $\Forms^{n,0}(F)$, but
rather a representative of a cohomology class $[\L]\in H^{n,0}(\dh)$.

Note that even though $\L$ can be thought of as a form on $J^kF$, below
we will carry out all calculations on $J^\oo F$, with the proviso that
all intermediate formulas could have been projected onto jet bundles of
some finite order, bounded throughout the calculation. When necessary,
we shall make use of a local coordinate system $(x^i,u^a_I)$ on $J^\oo
F$.

Using integration by parts if necessary, we can always write the first
vertical variation of the Lagrangian density as
\begin{equation}\label{eq:dvL}
	\dv \L = \EL_a\wedge\dv{u^a} - \dh\theta.
\end{equation}
All terms proportional to $\dv{u^a_I}$, $|I|>0$, have been absorbed into
$\dh\theta$. In the course of performing the integrations by parts,
the coefficients $\EL_a$ can acquire dependence on jets up to order
$2k$. The form $\theta$ is not uniquely specified, as one can freely
substitute $\theta \to \theta + \dh\sigma$, where $\sigma$ is any
form%
	\footnote{Actually, $\dh\sigma$ could also be replaced by any merely
	\emph{closed} form. However, the cohomology groups $H^{*,v}(F)$ for
	$v>0$ are always trivial~\cite[Ch.5]{anderson-var}, so there is no loss in
	generality.} %
in $\Forms^{n-2,1}(F)$. Thus, the jet order of $\theta$ is not bounded
from above (due to the arbitrariness in $\sigma$). However, as can be
seen from integration by parts in local coordinates, $\theta$ can always
be chosen to depend on jets of order no higher than $2k-1$.  The forms
$\EL_a$ define a bundle morphism
\begin{equation}
	\EL\colon J^{2k} F \to \tilde{F}^* = \Lambda^n M\otimes_M F^*,
\end{equation}
where $F^*\to M$ is the dual vector bundle to $F\to M$. The equation
form $(\EL,\tilde{F}^*)$ determines the so-called \emph{Euler-Lagrange
(EL) PDE system} $\iota\colon \E_\EL\sso J^{2k}F$ associated with the
Lagrangian density $\L$ or equivalently the local action functional
$S[\phi]$.

A PDE system with an equation form given by Euler-Lagrange equations of
a Lagrangian density is said to be \emph{variational}. The form
$\theta\in \Forms^{n-1,1}(F)$ is referred to as the \emph{presymplectic
potential current density}.  Applying the vertical differential to
$\theta$ we obtain the \emph{presymplectic current density} (or the
\emph{presymplectic current
density defined by $\L$} if the extra precision is necessary)
\begin{equation}\label{eq:omega-def}
	\omega = \dv \theta,
\end{equation}
with $\omega\in \Forms^{n-1,2}(F)$. Since the vertical differential does
not increase the jet order, the jet order of $\omega$ is bounded by that
of $\theta$. Note that changing $\L$ by a boundary term $\dh\B$ changes
$\theta$ by the vertically exact term $\dv\B$. Hence, $\omega$ is not
altered by this change. If one changes $\theta$ by the addition of a
horizontally exact term $\dh\sigma$, then the presymplectic current
density also changes by a horizontally exact term, $\omega \to \omega -
\dh\dv\sigma$.

The terminology for $\omega$ comes from classical field theory in
physics. Restricted to solution of $\E_\EL$ and integrated over a closed
codim-$1$ surface in $M$ (most often a Cauchy surface) $\omega$ defines
a presymplectic form on the space of solutions of the EL equations. This
(possibly infinite dimensional) space, known as the phase space, is then
a presymplectic manifold. If the initial value problem on this surface
of integration is well posed, it is even a symplectic manifold. This
method of construction the symplectic form on the phase space of a
classical field theory is known as the \emph{covariant phase space
method}~\cite{zuck,abr,crn-wit,lw,bhs}.

Since $\omega$ is, like $\L$, to be integrated over a boundaryless
submanifold, we only care about its equivalence class $[\omega]\in
\Forms^{n-1,2}(F)/\dh\Forms^{n-2,2}(F)$. Thus, the discussion above has
shown the following
\begin{proposition}
	Given a Lagrangian density $\L\in \Forms^{n,0}(F)$ of order $k$, there
	exist forms $\theta\in \Forms^{n-1,1}(F)$ and $\omega\in
	\Forms^{n-1,2}(F)$, of jet order at most $2k-1$, as well as a section
	$\EL\colon J^{2k}\to \tilde{F}^*$ that satisfy the equations
	\begin{align}
		\dv\L &= \EL_a\wedge\dv{u^a_I} - \dh\theta \\
		\dv\theta &= \omega .
	\end{align}
	Moreover, the equivalence class $[\L]\in H^{n,0}(\dh)$ determines the
	section $\EL$ and the equivalence class $[\omega]\in
	\Forms^{n-1,2}(F)/\dh\Forms^{n-2,2}(F)$ uniquely.
\end{proposition}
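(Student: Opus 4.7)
The plan is to establish existence of $\EL$, $\theta$, and $\omega$ by the local integration-by-parts computation already sketched in the discussion around equation~\eqref{eq:dvL}, and then to verify both uniqueness claims by tracking how the construction transforms under the ambiguity $\L \to \L + \dh\B$ (preserving $[\L]$) and under the ambiguity $\theta \to \theta + \dh\sigma$ (preserving the defining equation for $\theta$). Global existence on $F$ follows from the contractibility assumption, so the real work is local.

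In bundle-adapted coordinates $(x^i, u^a_I)$, write $\L = \tilde\L(x, u_I)\, \dh x^1\wedge\cdots\wedge \dh x^n$, so that $\dv\L = \sum_{|I|\le k} (\del \tilde\L/\del u^a_I)\, \dv u^a_I \wedge \dh x^1\wedge\cdots\wedge \dh x^n$. The key variational bicomplex identity $\dh(\dv u^a_I) = -\dv u^a_{Ii}\wedge \dh x^i$ allows each integration-by-parts step to trade a term involving $\dv u^a_{Ii}$ for one involving $\dv u^a_I$, with coefficient carrying one additional total horizontal derivative, plus a $\dh$-exact remainder. Iterating at most $k$ times sweeps every term with $|I|\ge 1$ into $\dh\theta$ and leaves the canonical form $\dv\L = \EL_a\wedge \dv u^a - \dh\theta$. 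The coefficient of $\dv u^a$ is the coordinate-invariant Euler--Lagrange expression, so it patches into a global section $\EL\colon J^{2k}F\to \tilde F^*$. For the jet order bounds: $\del\tilde\L/\del u^a_I$ is of order at most $k$; each integration by parts adds one total horizontal derivative; after the full $|I|\le k$ iterations $\EL_a$ has order at most $2k$, while each term contributing to $\theta$ carries at most $|I|-1$ iterated derivatives of a coefficient of order at most $k$, so has order at most $2k-1$. Since $\dv$ preserves jet order, $\omega = \dv\theta$ inherits the bound $2k-1$.

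For uniqueness of $\EL$, suppose $\L' = \L + \dh\B$; then $\dv\L' = \dv\L - \dh\dv\B$, so subtracting the two decompositions yields $(\EL_a - \EL'_a)\wedge \dv u^a = \dh(\theta - \theta' - \dv\B)$. The standard fact---that a source form $G_a\wedge \dv u^a \in \Forms^{n,1}(F)$ can be $\dh$-exact only if $G_a = 0$, a consequence of the interior Euler operator decomposition of the variational bicomplex~\cite{anderson-var}---then forces $\EL = \EL'$. For uniqueness of $[\omega]$ the two ambiguities are checked separately: replacing $\theta$ by $\theta + \dh\sigma$ with $\sigma\in \Forms^{n-2,1}(F)$ shifts $\omega$ by $-\dh\dv\sigma$, a horizontally exact change; and replacing $\L$ by $\L + \dh\B$ permits the choice $\theta' = \theta + \dv\B$, giving $\omega' = \dv^2\B + \omega = \omega$ exactly. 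The one nonroutine ingredient is the uniqueness of $\EL$, which rests on this source-form injectivity up to $\dh$; it is a well-documented consequence of the interior Euler operator construction, and would be cited rather than reproved.
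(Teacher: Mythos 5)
Your proposal is correct and follows essentially the same route as the paper, whose ``proof'' is the informal discussion preceding the proposition: integration by parts in the variational bicomplex to reach $\dv\L = \EL_a\wedge\dv u^a - \dh\theta$, the jet-order bookkeeping giving $2k$ for $\EL$ and $2k-1$ for $\theta$ and $\omega$, and the observation that $\L\to\L+\dh\B$ shifts $\theta$ by $\dv\B$ (leaving $\omega$ fixed) while $\theta\to\theta+\dh\sigma$ shifts $\omega$ by the horizontally exact term $-\dh\dv\sigma$. The only ingredients you make explicit that the paper leaves tacit are the source-form injectivity (via the interior Euler operator) for uniqueness of $\EL$, and you could likewise note---as the paper does in a footnote---that the exactness of the interior rows $H^{h,v}(\dh)$ for $v>0$ is what guarantees that $\dh\sigma$ exhausts the residual ambiguity in $\theta$ once $\EL$ is fixed.
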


The following lemma is an easy consequence of the definition of
$\omega$.
\begin{lemma}
When pulled back along the inclusion $\iota\colon \E_\EL\sse J^{2k} F$,
the image $\iota^*\omega \in \Omega_{\E_\EL}^{n-1,2}(F)$ of the form
$\omega$ Equation~\eqref{eq:omega-def} is both horizontally and
vertically closed:
\begin{align}
	\dh^{\E_\EL} \iota^* \omega &= 0 , \\
	\dv^{\E_\EL} \iota^* \omega &= 0 .
\end{align}
\end{lemma}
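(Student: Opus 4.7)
The plan is to handle the two closure conditions separately. The vertical one is essentially immediate, and the horizontal one reduces to a short algebraic manipulation combined with the defining property of $\E_\EL$ as a zero locus.

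For the vertical closure $\dv^{\E_\EL}\iota^*\omega = 0$: since $\omega = \dv\theta$ by construction, on all of $J^\oo F$ one has $\dv\omega = \dv^2\theta = 0$ before any restriction. As noted at the end of Section~\ref{sec:jets}, the horizontal/vertical bi-grading and both differentials survive restriction to $\E_\EL^\oo$, so pullback along $\iota$ commutes with $\dv$. Hence $\dv^{\E_\EL}\iota^*\omega = \iota^*\dv\omega = 0$ at once.

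For the horizontal closure $\dh^{\E_\EL}\iota^*\omega = 0$ I would apply $\dv$ to both sides of the identity $\dv\L = \EL_a\wedge\dv u^a - \dh\theta$ furnished by the preceding proposition. Using $\dv^2 = 0$, $\dv^2 u^a = 0$, and the anticommutation $\dv\dh = -\dh\dv$, this collapses to the off-shell relation
\begin{equation*}
\dh\omega \;=\; -\,\dv\EL_a\wedge\dv u^a
\end{equation*}
on $J^\oo F$. Now pull back along $\iota$: by the very definition of $\E_\EL$ as the zero locus of the equation form $(\EL,\tilde{F}^*)$, each coefficient $\EL_a$ vanishes on $\E_\EL$ and hence on every prolongation, so $\iota^*\EL_a = 0$. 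Since $\iota^*$ commutes with $\dv$, we also have $\iota^*\dv\EL_a = \dv^{\E_\EL}(\iota^*\EL_a) = 0$, and therefore $\iota^*(\dv\EL_a\wedge\dv u^a) = 0$, giving $\dh^{\E_\EL}\iota^*\omega = \iota^*\dh\omega = 0$.

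There is no real obstacle here: the argument is a two-line use of the bicomplex relations together with the vanishing of $\EL_a$ on the shell $\E_\EL^\oo$. The one point worth flagging is that one does need the compatibility of $\iota^*$ with both $\dh$ and $\dv$, which is exactly the ``survival of the bi-grading and differentials'' under restriction recalled in Section~\ref{sec:jets}. Note in passing that $\dv\omega = 0$ is an \emph{off-shell} identity on $J^\oo F$, whereas $\dh\omega = 0$ is genuinely an \emph{on-shell} statement obtained only after restriction to $\E_\EL^\oo$; this asymmetry is precisely what makes $\omega$ a nontrivial conserved current rather than a tautologically closed form, and it is the feature that will be exploited in the inverse-problem argument of Section~\ref{sec:invprob}.
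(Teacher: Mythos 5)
Your proof is correct and follows essentially the same route as the paper's: vertical closure is immediate from $\omega=\dv\theta$ and $\dv^2=0$, while horizontal closure comes from applying $\dv$ to $\dv\L=\EL_a\wedge\dv u^a-\dh\theta$ to get the off-shell identity $\dh\omega=-\dv\EL_a\wedge\dv u^a$ and then pulling back to $\E_\EL$, where $\EL_a$ and hence $\dv\EL_a$ are annihilated. The only cosmetic difference is that the paper phrases the last step as ``$\EL_a$ and $\dv\EL_a$ generate the ideal annihilated by $\iota^*$,'' whereas you derive $\iota^*\dv\EL_a=\dv^{\E_\EL}\iota^*\EL_a=0$ from the commutation of $\iota^*$ with $\dv$; these are equivalent.
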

\begin{proof}
The horizontal and vertical differentials on $\E_\EL$ are defined by
pullback along $\iota$, that is, $\dh^{\E_\EL} \iota^* = \iota^* \dh$
and $\dv^{\E_\EL} \iota^* = \iota^* \dv$. Since $\omega=\dv\theta$ is
already vertically closed in $\Forms^{n-1,2}(F)$, it is a fortiori
vertically closed in $\Forms_{\E_\EL}^{n-1,2}(F)$. The rest is a
consequence of the nilpotence and anti-commutativity of $\dh$ and $\dv$:
\begin{align}
	0 = \dv^2\L
		&= \dv\EL_a\wedge\dv{u^a} - \dv\dh\theta, \\
	\dh\omega
		&= \dh\dv\theta = -\dv\dh\theta
		= -\dv \EL_a\wedge\dv{u^a} , \\
	\dh^{\E_\EL} \iota^*\omega
		&= \iota^*\dh\omega
		= -\iota^* \dv\EL_a \wedge \dv u^a
		= 0 ,
\end{align}
where the last equality holds because $\EL_a$ and $\dv\EL_a$ generate
the ideal in $\Forms^{*,*}(F)$ annihilated by the pullback $\iota^*$.
\end{proof}

In fact, we will promote the name \emph{presymplectic current density}
to any form satisfying these properties.
\begin{definition}\label{def:presymp}
Given a PDE system $\iota\colon \E \sso J^kF$ we call a form
$\hat{\omega}\in \Omega_\E^{n-1,2}(F)$ a \emph{presymplectic current
density compatible with $\E$} if it is both horizontally and vertically
closed,
\begin{align}
	\dh^\E \hat{\omega} &= 0 , \\
	\dv^\E \hat{\omega} &= 0 .
\end{align}
In other words, $\omega$ represents a cocycle (that is, a vertically
closed element) in the cohomology complex $(H_\E^{n-1,*}(\dh),\dv)$.
(Note that we are not bounding the jet order of $\hat{\omega}$.)
\end{definition}

The particular form $\iota^*\omega$ defined by Eq.~\eqref{eq:omega-def}
will be referred to as the presymplectic current density associated to
or obtained from the Lagrangian density $\L$, if there is any potential
confusion.

\section{Inverse problem}\label{sec:invprob}
In the preceding section we have defined variational PDE systems showed
that each one is endowed with a geometric structure (the presymplectic
current). The \emph{inverse problem of the calculus of variations} (or
the \emph{inverse problem} for short) is, given a bundle $F\to M$ and a
PDE system $\iota\colon \E\sso J^kF$ of order $k$, to decide when it is
variational.

The simpler \emph{non-multiplier} version of the inverse problem
presupposes that we are given an equation form $(f,\tilde{F}^*)$ for
$\E$. It consists of deciding whether there exists a Lagrangian density
whose Euler-Lagrange equations $(\EL,\tilde{F}^*)$ are \emph{equal} to
$(f,\tilde{F}^*)$. The necessary and sufficient conditions for the
non-multiplier inverse problem are known and are called the
\emph{Helmholtz conditions}. They can be formulated elegantly as the
requirement that the form $f_a\wedge \dv{u^a}\in \Forms^{n,1}(F)$ be
closed in a slightly extended version of the variational
bicomplex~\cite[Ch.5]{anderson-var}, \cite{sard}.

The harder \emph{multiplier} inverse problem consists of deciding
variationality directly from the sub-bundle $\iota\colon \E\sso J^kF$
itself or, equivalently, \emph{any} regular equation form $(f,E)$ of
$\E$. The name comes from the possibility of reducing it to the simpler
problem by finding the right set of ``multipliers'' $\epsilon$ (which
could also be differential operators) such that $(\epsilon\circ
f,\tilde{F}^*)$ satisfies the Helmholtz conditions. Unfortunately, the
multiplier inverse problem does not yet have a satisfactory solution in
full generality~\cite{anderson-aspects,saunders}.

An important contribution to the subject was made in~\cite{hen82}, as
discussed in the \hyperref[sec:intro]{Introduction}. Henneaux showed
that, for second order ODE systems that can be put into the canonical
form
\begin{equation}\label{eq:ode-form}
	\ddot{q} - f(t,q,\dot{q}) = 0 ,
\end{equation}
the existence of a symplectic form $\hat{\omega}(t,q,\dot{q})$, defined
on the bundle of initial data $(q,\dot{q})$ over the time axis, that is
conserved by the flow of the vector field associated to the ODE
system~\eqref{eq:ode-form},
\begin{equation}\label{eq:ode-symp}
	\del_t \hat{\omega} - \Lie_{f} \hat{\omega} = 0 ,
\end{equation}
is equivalent to this ODE system being variational with a unique
Lagrangian density $\L$ (up to addition of boundary terms) whose
associated symplectic form is equal to $\hat{\omega}$. Since the EL
equations of $\L$ will in general not be directly in the canonical
form~\eqref{eq:ode-form}, this result shows that a conserved symplectic
form $\hat{\omega}$ is a certificate of a positive solution of the
multiplier inverse problem. Henneaux's proof even provides a procedure
to construct $\L$ from $\hat{\omega}$ and the ODE system. The multiplier
inverse problem, is then reduced to identifying conserved symplectic
forms, which could be attacked by algebraic means.

Unfortunately, until rather recently, it has not been clear how to
generalize Henneaux's reformulation of the multiplier inverse problem to
PDEs~\cite{hen84}. Several aspects of the discussion in the previous paragraph are
specific to ODEs: (a) the possibility of a simple canonical form
like~\eqref{eq:ode-form}, (b) the geometric formulation of the ODE as a
vector field, (c) a local symplectic form $\hat{\omega}$, (d) the
conservation condition~\eqref{eq:ode-symp}. In this section, we present
a partial generalization of Henneaux's result to PDE systems. The PDE
analogs of the key aspects are (b) the geometric formulation in terms of
jet bundles (as in Section~\ref{sec:jets}), (c) the local covariant
presymplectic current density $\hat{\omega}$ (as in
Definition~\ref{def:presymp}), and (d) the closure condition
$\dh^\E\hat{\omega}=0$. Unfortunately, we have not been able to identify
simple, local analogs of the canonical form~\eqref{eq:ode-form} and
nondegeneracy of $\hat{\omega}$ (that is, being symplectic rather than
just presymplectic). Due to the last caveat, the procedure given below
does not produce a unique class of equivalent Lagrangian densities $\L$
associated to a given PDE system $\E$ and a presymplectic current
$\hat{\omega}$. On the other hand, each Lagrangian density produced is
in a certain sense a subsystem of $\E$: any solution of $\E$ also solves
the corresponding EL equations. Section~\ref{sec:uniq} is an attempt to
characterize the class of Lagrangian densities that can be so produced.

The key observation that connects a local presymplectic current density
with a variational formulation was made in~\cite[Sec.4]{bhl}, which is a
more geometric formulation of an earlier observation made in \cite{hydon}.
However, these authors did not attempt to place this result in the
context of other work on the inverse problem of the calculus of
variations and did not remark the similarity with the previous work of
Henneaux. Moreover, their calculations remained ``on-shell'', which
avoided lifting the Lagrangian density ``off-shell'' (see the proof
below), which is really necessary for a solution of the inverse problem.
Below, we clarify this observation and show in detail how a Lagrangian
density can be constructed using a method related to cohomological
descent~\cite{bbh} (see also Refs.~[88,89] and~[191] therein).

As before, consider a vector bundle $F\to M$ over an $n$-dimensional
manifold $M$ and a regular PDE system $\iota\colon\E\sso J^kF$ of order
$k$. Recall also that $\tilde{F}^* = \Lambda^n M\otimes_M F^*$ is the
densitized dual vector bundle of $F$. Finally, an important hypothesis
currently assumed is that the de~Rham cohomology groups of $\E^\oo$ and
$J^\oo F$ all vanish. All other relevant notions and notation are
defined in Section~\ref{sec:jets}.

\begin{theorem}
If there exists a presymplectic current density $\hat{\omega}\in
\Forms_\E^{n-1,2}(F)$ compatible with $\E$, then there exists a local
Lagrangian density $\L\in \Forms^{n,0}(F)$ such that the associated
presymplectic current density $\omega$ coincides with $\hat{\omega}$ on
solutions ($\iota^*_\oo\omega = \hat{\omega}$) and the
Euler-Lagrange PDE system $\E_{\EL}$, with equation form
$(\EL,\tilde{F}^*)$, is compatible with $\E$ ($\E^\oo \sse \E_{\EL}^\oo$
or all solutions of $\E$ also solve $\E_{\EL}$).
\end{theorem}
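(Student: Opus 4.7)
The strategy is a cohomological descent in the variational bicomplex, running in reverse the construction of $\omega$ from $\L$ carried out in Section~\ref{sec:symp}. The plan is to first invert the vertical differential to obtain an off-shell presymplectic potential $\theta$ satisfying $\dv\theta=\omega$ with $\iota^*_\oo\omega=\hat\omega$, and then to extract the Lagrangian $\L$ by realizing the first variational formula $\dv\L=\EL_a\wedge\dv u^a-\dh\theta$. The hypothesis that the de~Rham cohomologies of $\E^\oo$ and $J^\oo F$ vanish will be used throughout to invert $\dv$ (via standard acyclicity of the variational bicomplex in positive vertical degree on contractible fibers), while the horizontal closure condition $\dh^\E\hat\omega=0$ is what ensures the Euler-Lagrange equations $\EL_a$ eventually produced vanish on $\E^\oo$.

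The first step inverts $\dv^\E$ on shell. Since $\hat\omega$ is $\dv^\E$-closed and the hypothesis implies that the relevant vertical cohomology on $\E^\oo$ vanishes, there exists $\hat\theta\in\Forms_\E^{n-1,1}(F)$ with $\dv^\E\hat\theta=\hat\omega$. Regularity of the prolonged inclusion $\iota_\oo$ makes the pullback $\iota^*_\oo$ surjective on differential forms, so $\hat\theta$ lifts to some $\theta\in\Forms^{n-1,1}(F)$ with $\iota^*_\oo\theta=\hat\theta$. Setting $\omega:=\dv\theta$, one immediately has $\iota^*_\oo\omega=\dv^\E\hat\theta=\hat\omega$.

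The crux of the argument is extracting $\L$ from $\theta$. The key off-shell identity is $\dv(\dh\theta)=-\dh\omega$, whose pullback vanishes by hypothesis: $\iota^*_\oo\dh\omega=\dh^\E\hat\omega=0$. Thus $\dh\omega$ is simultaneously $\dv$-closed on $J^\oo F$ and vanishes on $\E^\oo$. Writing $\dh\omega=\dv\alpha$ for some $\alpha\in\Forms^{n,1}(F)$ (via vertical acyclicity), decomposing $\alpha=-\EL_a\wedge\dv u^a+\dh\beta$ by the standard integration-by-parts identity into a source 1-form plus a $\dh$-exact remainder, and absorbing $\beta$ by replacing $\theta\mapsto\theta+\beta$ (compensating the lift of $\hat\theta$ so that $\iota^*_\oo\omega=\hat\omega$ is preserved, and using the remaining ambiguity in $\alpha$ to arrange that $\EL_a$ vanishes on $\E^\oo$), one obtains $\dh\omega=-\dv\EL_a\wedge\dv u^a$ with source coefficients vanishing on $\E^\oo$. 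The form $\EL_a\wedge\dv u^a-\dh\theta$ is then $\dv$-closed, and a final application of vertical acyclicity on $J^\oo F$ produces $\L\in\Forms^{n,0}(F)$ with $\dv\L=\EL_a\wedge\dv u^a-\dh\theta$.

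The two conclusions of the theorem then follow by construction: $\iota^*_\oo\omega=\hat\omega$, and since $\EL_a$ vanishes on $\E^\oo$ we have $\E^\oo\sse\E_\EL^\oo$. The main obstacle is the bookkeeping in the crux step: every invocation of vertical acyclicity or of the integration-by-parts decomposition introduces ambiguities, and all of them must be coordinated so that, after the $\dh$-exact correction $\beta$ is absorbed, $\iota^*_\oo\omega$ remains equal to $\hat\omega$ and the source coefficients $\EL_a$ genuinely vanish on $\E^\oo$. This careful ``off-shell lift'' is precisely what, as highlighted in the introduction, distinguishes the present argument from the purely on-shell calculation of Bridges--Hydon--Lawson.
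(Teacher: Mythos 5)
Your overall strategy (descent in the bicomplex, off-shell lifting, integration by parts to isolate a source form) is the right one, but there are two genuine gaps, both at places where the paper's proof does its real work.

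First, your opening step asserts that since $\hat{\omega}$ is $\dv^\E$-closed, the hypotheses give $\hat{\theta}\in\Forms_\E^{n-1,1}(F)$ with $\dv^\E\hat{\theta}=\hat{\omega}$. The stated hypothesis is the vanishing of the \emph{de~Rham} cohomology of $\E^\oo$, not of the vertical cohomology of the bicomplex restricted to $\E^\oo$; the standard vertical acyclicity theorem applies to the free bicomplex on $J^\oo F$ (contractible fibers), and there is no comparable general statement for $\Forms^{*,*}_\E(F)$. The paper deliberately avoids needing one: it uses $H^{n+1}(\E^\oo)=0$ to write $\d^\E\hat{\rho}=\hat{\omega}$ for a single inhomogeneous $\hat{\rho}$ and then splits into bidegrees, which yields only $\hat{\omega}=\dv^\E\hat{\rho}_{n-1,1}+\dh^\E\hat{\rho}_{n-2,2}$, i.e.\ $\dv^\E$-exactness \emph{modulo a horizontally exact term}. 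Demanding $\dv^\E\hat{\theta}=\hat{\omega}$ on the nose is both unjustified and stronger than what the construction can or needs to provide (the agreement $\iota_\oo^*\omega=\hat{\omega}$ is only meant in $\Forms^{n-1,2}/\dh\Forms^{n-2,2}$).

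Second, and more seriously, the crux step is asserted rather than proved. You need a source form $\EL_a\wedge\dv u^a$ with $\dv(\EL_a\wedge\dv u^a)=-\dh\omega$ \emph{and} $\iota_\oo^*\EL_a=0$. The ambiguity in a $\dv$-primitive $\alpha$ of $\dh\omega$ is exactly $\alpha\to\alpha+\dv\gamma$ with $\gamma\in\Forms^{n,0}(F)$, whose source form is the Euler--Lagrange form of $\gamma$; so ``using the remaining ambiguity in $\alpha$ to arrange that $\EL_a$ vanishes on $\E^\oo$'' is literally the statement that the desired Lagrangian exists --- the argument is circular at the decisive point. The paper supplies the missing mechanism through the next rung of the descent: the bidegree-$(n,0)$ equation $\dv^\E\hat{\rho}_{n,0}=-\dh^\E\hat{\rho}_{n-1,1}$, lifted off-shell, gives $\dv(\rho_{n,0}-f_{IA}\mu^{IA}_{n,0})=-\dh\rho_{n-1,1}+f_{IA}(\lambda^{IA}_{n,1}-\dv\mu^{IA}_{n,0})$, where the obstruction term is \emph{manifestly} in the ideal generated by the prolonged equations $f_{IA}$. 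Integration by parts preserves this ideal (total derivatives of $f_{IA}$ are again components $f_{iIA}$ of the prolongation), so the resulting source form is $f_{IA}\epsilon^{IA}_a\wedge\dv u^a$ and $\E^\oo\sse\E^\oo_{\EL}$ follows. Your proof never produces the analog of $\rho_{n,0}$, and without it you have no handle on where the constructed $\EL_a$ vanishes. The fix is to run one more step of the descent on $\E^\oo$ (obtaining an on-shell primitive of $-\dh^\E\hat{\theta}$), lift it off-shell keeping explicit track of the terms proportional to $f_{IA}$ and $\dv f_{IA}$, and only then integrate by parts.
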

\begin{proof}
The fact that $\hat{\omega}$, from the definition of a presymplectic
current density, is both horizontally and vertically closed as an
element of $\Forms^{n-1,2}_\E(F)$ can be easily seen to be equivalent to
$\hat{\omega}$ being de~Rham closed as an element of
$\Forms^{n+1}(\E^\oo)$. Equivalently, if $\omega$ is of homogeneous
degrees $(n-1,2)$ and de~Rham closed, it follows that it is both
horizontally and vertically closed.

By assumption, the de~Rham cohomology group $H^{n+1}(\E^\oo)$ is trivial.
If it were not, there could be global topological obstructions to this
construction, which we do not discuss in the current treatment.  Then
there must exist an element $\hat{\rho}\in\Omega^n(\E^\oo)$ such that
$\d^\E\hat{\rho} = \hat{\omega}$. If we expand $\hat{\rho}$ in
components of homogeneous vertical and horizontal degrees, we can
represent it as $\hat{\rho} = \sum_{h+v=n} \hat{\rho}_{h,v}$. This sum
is finite, since $0\le h \le n$ and $0 \le v$. The equation
\begin{equation}
	\d^\E\hat{\rho} = \dv^\E\hat{\rho} + \dh^\E\hat{\rho} = \hat{\omega}
\end{equation}
then naturally expands into the following system for the homogeneous
components
\begin{align}
	\dv^\E \hat{\rho}_{0,n}
		&= 0 , \\
	\dv^\E \hat{\rho}_{1,n-1}
		&= -\dh^\E \hat{\rho}_{0,n} , \\
		&\vdots \\
	\dv^\E \hat{\rho}_{n-1,1}
		&= -\dh^\E \hat{\rho}_{n-2,2} + \hat{\omega} , \\
	\dv^\E \hat{\rho}_{n,0}
		&= -\dh^\E \hat{\rho}_{n-1,1} .
\end{align}
Note that $\hat{\rho}$ could be constructed by solving the above
equations term by term. This method is a special case of the
\emph{cohomological descent method}~\cite{bbh}.


As written, these equations hold ``on-shell,'' that is, on the
$\oo$-prolonged PDE manifold $\E^\oo$. But the descent equations lift
``off-shell,'' to the total space of jet bundle $J^\oo F$ containing
$\E^\oo$. Denote the lifted forms by removing hats, $\hat{\rho}_{h,v} =
\iota_\oo^*\rho_{h,v}$, with the exception $\hat{\omega} =
\iota_\oo^*\omega'$. These lifts are not unique, as we could always
change them by adding terms that are annihilated by the pullback to
$\E^\oo$. Recall that $\E^\oo$ is defined by the equations $p^\oo f=0$,
so given local coordinates $(x^i,v_{A})$ on the equation bundle $E$,
which extend to $(x^i,v_{IA})$ on $J^\oo E$, the terms annihilated by
the pullback to $\E^\oo$ must be proportional to $f_{IA}$ or the
exterior vertical derivatives $\dv f_{IA}$. After the lift, the above
equations for $\rho_{h,v}$ also only hold up to terms proportional to
$f_{IA}$ or $\dv f_{IA}$,
\begin{alignat*}{4}
	\dv \rho_{0,n}
		&= 0 &&{}
			+ f_{IA}\lambda^{IA}_{0,n+1} &&{}+ \dv f_{IA} \wedge \mu^{IA}_{0,n}, \\
	\dv \rho_{1,n-1}
		&= -\dh \rho_{0,n} &&{}
			+ f_{IA}\lambda^{IA}_{1,n} &&{}+ \dv f_{IA} \wedge \mu^{IA}_{1,n-1} , \\
		&~~\vdots \\
	\dv \rho_{n-2,2}
		&= -\dh \rho_{n-3,3} &&{}
			+ f_{IA}\lambda^{IA}_{n-2,3} &&{}+ \dv f_{IA} \wedge \mu^{IA}_{n-2,2} , \\
\intertext{and}
	\dv\rho_{n-1,1}
		&= -\dh\rho_{n-2,2} &&{}
			+ f_{IA}\lambda^{IA}_{n-1,2} &&{}+ \dv f_{IA} \wedge \mu^{IA}_{n-1,1} &{}
			+ \omega', \\
	\dv\rho_{n,0}
		&= -\dh\rho_{n-1,1} &&{}
			+ f_{IA}\lambda^{IA}_{n,1} &&{}+ \dv f_{IA} \wedge \mu^{IA}_{n,0} .
\end{alignat*}
Note that we have introduced the coefficient forms $\lambda^{IA}_{h,v}$
and $\mu^{IA}_{h,v}$ to parametrize the terms annihilated by the
pullback to $\E^\oo$; they are of homogeneous horizontal and vertical
degrees, as indicate by their subscripts. These coefficient forms are not
simply arbitrary. As shown below, $\lambda^{IA}_{n,1}$ and
$\mu^{IA}_{n,0}$ will contain in them the information about the
multipliers needed to solve the inverse problem.

Given local coordinates $(x^i,u^a_I)$ on $J^\oo F$, the goal now is to
construct the forms $\L$, $\theta$, $\omega$ and $\EL_a\wedge \dv u^a$,
of respective degrees $(n,0)$, $(n-1,1)$, $(n-1,2)$ and $(n,1)$, such
that the corresponding equations of the covariant phase space method
hold:
\begin{align}
\label{L-EL-th}
	\dv\L &= \EL_a \wedge \dv{u^a} - \dh\theta , \\
	\dv\theta &= \omega .
\end{align}
We rewrite the $(n,0)$-descent equation as
\begin{align}
	\dv(\rho_{n,0}-f_{IA}\mu^{IA}_{n,0})
	&= f_{IA}(\lambda^{IA}_{n,1}-\dv\mu^{IA}_{n,0}) - \dh\rho_{n-1,1} \\
	&= f_{IA} \epsilon^{IA}_a \wedge \dv{u^a}
		- \dh(\rho_{n-1,1} + f_{IA}\lambda^{\prime IA}_{n-1,1}) ,
\end{align}
where integration by parts was used to construct $\lambda^{\prime
IA}_{n-1,1}$ and $\epsilon^{IA}_a$, with the latter being $(n,0)$-forms.
We add $\dv(f_{IA} \lambda^{\prime IA}_{n-1,1})$ to both sides of the
$(n-1,1)$-descent equation and rewrite it as
\begin{multline}
	\dv(\rho_{n-1,1}+f_{IA}\lambda^{\prime IA}_{n-1,1})
	= \omega' -\dh\rho_{n-2,2} + f_{IA} (\lambda^{IA}_{n-1,2} + \dv\lambda^{\prime IA}_{n-1,1}) \\
		{} + \dv{f_{IA}}\wedge(\mu^{IA}_{n-1,1}+\lambda^{\prime IA}_{n-1,1}) .
\end{multline}
It is now clear that these equations take the desired form with the
following identifications:
\begin{align}
	\L &= \rho_{n,0} - f_{IA} \mu^{IA}_{n,0} , \\
	\theta &= \rho_{n-1,1} + f_{IA} \lambda^{\prime IA}_{n-1,1} , \\
	\omega &= \omega' -\dh\rho_{n-2,2} \\
\notag & \qquad {}
		+ f_{IA} (\lambda^{IA}_{n-1,2} + \dv\lambda^{\prime IA}_{n-1,1})
		+ \dv{f_{IA}}\wedge(\mu^{IA}_{n-1,1}+\lambda^{\prime IA}_{n-1,1}) , \\
	\EL_a\wedge \dv{u^a} &= f_{IA} \epsilon^{IA}_a \wedge \dv{u^a} .
\end{align}
The form $\EL_a\wedge \dv{u^a}$ naturally correspond to a bundle
morphism $\EL\colon J^l F\to \tilde{F}^*$, for some finite jet degree
$l$, which defines the Euler-Lagrange PDE system $\E_{\EL}\sso J^lF$ via
the equation form $(\EL,\tilde{F}^*)$. From the last equation, it is
obvious that the constructed Euler-Lagrange PDE system contains the
original one, $\E^\oo \sse \E_{\EL}^\oo$.
\end{proof}

Note that all the above calculations were done on jet bundles of
infinite order. However, each of the forms introduced at intermediate
steps depends only on jet coordinates of some finite order. Therefore,
the final Lagrangian will also depend on jet coordinates up to some
finite order, which may be much higher than the order of the original
PDE system. Note that the degree of the Lagrangian could be artificially
inflated by the presence of boundary terms like $\dh\B$, where $\B$
could depend on jet coordinates of some high order.

It is also important to remark that the resulting Euler-Lagrange
equations may not be equivalent to the full original PDE system, but
only to a subsystem thereof or a ``weaker'' system, one whose solution
space contains all solutions of $\E$, but may be strictly larger. This
is unavoidable, since the PDE system may consist, for example, of
several uncoupled subsystems, each of which may have an independent
variational formulation. More complicated situations are of course
possible. In the ODE context, the requirements that $\hat{\omega}$
actually be symplectic (rather than just presymplectic) and that the ODE
system is in the canonical form~\eqref{eq:ode-form} are sufficient to
guarantee that the original system is fully variational~\cite{hen82}.
Unfortunately, at least one of these conditions fails already when the
ODE system is not determined (possibly because of gauge invariance) or
under the inclusion of algebraic equations that constrain the initial
data $(q,\dot{q})$.  In the PDE case, in analogy with the ODE one, there
may be a set of conditions on the symbol of the PDE system and on
$\hat{\omega}$ that guarantees a fully variational formulation, which
may be further complicated by allowing equations with gauge symmetries
or constraints. However, such analogous conditions remain to be
investigated in detail.

\section{Arbitrariness in Lagrangian density construction}\label{sec:uniq}
There were a number of choices involved in the construction of the
(off-shell) Lagrangian density $\L$ from the (on-shell) presymplectic
current density $\hat{\omega}$. In this section, we investigate how the
resulting $\L$ depends on these choices.

The choices are exhausted by the following substitutions:
\begin{align}
	\text{(i)} ~~ \hat{\omega} &\to \hat{\omega}
		+ \dh^\E \hat{\pi}_{n-2,2} ,
	\quad \text{with} ~~
		\dv^\E \hat{\pi}_{n-2,2} = 0 ; \\
	\text{(ii)} ~~ \hat{\rho} &\to \hat{\rho} + \d^\E \hat{\sigma} ; \\
	\text{(iii)} ~~ \rho &\to \rho
		+ f_{IA} \bar{\lambda}^{IA} + \dv f_{IA} \wedge \bar{\mu}^{IA} , \\
		\omega' &\to \omega' + f_{IA} \bar{\lambda}^{\prime IA}
			+ \dv f_{IA}\wedge \bar{\mu}^{\prime IA} .
\end{align}
We deal with each kind of substitution one by one.

\textit{(i)} It is easy to see the following identity:
\begin{equation}
	\d^\E(\hat{\rho} + \hat{\pi}_{n-2,2})
		= \hat{\omega} + \dh^\E\hat{\pi}_{n-2,2} .
\end{equation}
Since the change $\hat{\rho}\to \hat{\rho} + \hat{\pi}_{n-2,2}$ affects
neither of the $\hat{\rho}_{n,0}$ or $\hat{\rho}_{n-1,1}$ components,
the off-shell lift of the $(n,0)$ descent equation is unmodified.
Therefore, the Lagrangian density $\L$ does not change.

\textit{(ii)} Lifting $\hat{\sigma}$ off-shell to $\sigma$ and using
the decomposition into homogeneous components, $\sigma = \sum_{h+v=n}
\sigma_{h,v}$, we see the change \begin{equation}
	\rho_{h,v} \to \rho_{h,v} + \dh\sigma_{h-1,v} +
		\dv\sigma_{h,v-1}.
\end{equation}
However, since $\d(\rho + \d\sigma) = \d\rho$, the descent equations are
unmodified. In the end, the Lagrangian changes only as $\L \to \L +
\dh\sigma_{n-1,0}$. Since the change is by a horizontally exact term,
the EL equations remain the same.

\textit{(iii)} This kind of substitution does in general change the
equivalence class of the Lagrangian density, that is, the EL equations
of the modified Lagrangian density may be different. The forms
parametrizing the failure of the off-shell lift of the descent equations
undergo the change (except for the $\lambda_{n-1,2}$ and $\mu_{n-1,1}$
coefficients, which undergo obvious additional changes compensating
the change in $\omega'$, which ultimately do not affect $\L$)
\begin{align}
	\lambda_{h,v}^{iIA} &\to \lambda_{h,v}^{iIA}
		+ (\dh\bar{\lambda}^{iIA}_{h-1,v}
			+\d{x^{(i}}\wedge\bar{\lambda}^{I)A}_{h-1,v})
		+ \dv \bar{\lambda}^{iIA}_{h,v-1} , \\
	\mu_{h,v}^{iIA} &\to \mu_{h,v}^{iIA}
		- (\dh\bar{\mu}^{iIA}_{h-1,v}+\d{x^{(i}}\wedge \bar{\mu}^{I)A}_{h-1,v})
		+ (\bar{\lambda}^{iIA}_{h,v}-\dv\bar{\mu}^{iIA}_{h,v-1}) .
\end{align}
Therefore, the Lagrangian density undergoes the change
\begin{align}
	\L \to \L+\bar{\L} &= \L + f_{iIA} \bar{\lambda}^{iIA}_{n,0}
		- f_{iIA} [ \bar{\lambda}^{iIA}_{n,0}
			-(\dh\bar{\mu}^{iIA}_{n-1,0} + \d{x^{(i}}\wedge\bar{\mu}^{I)A}_{n-1,0}) ] \\
		&= \L + f_{iIA}(\dh\bar{\mu}^{iIA}_{n-1,0}
			+ \d{x^{(i}}\wedge\bar{\mu}^{I)A}_{n-1,0}) .
\end{align}
In general, the multipliers $\epsilon_a^{IA}$ will change as well, say
$\epsilon_a^{IA} \to \epsilon_a^{IA} + \bar{\epsilon}_a^{IA}$. Thus, the
EL equations of $\L$ and $\L+\bar{\L}$ may not be equivalent. In other
words, the equivalence classes $[\L]$ and $[\L+\bar{\L}]$ will differ.
Though both sets of EL equations will be consequences of $\E$.

At this point, it is worth reflecting on when two Lagrangians $\L$ and
$\L'$ should be considered equivalent. The standard answer is iff they
differ by a boundary term, $\L'-\L = \dh\B$. However, consider the
simple $1$-dimensional Lagrangians
\begin{align}
	\L[q_1,q_2,\lambda]
		&= \left[\frac{1}{2}\dot{q}_1^2 + \frac{1}{2}\dot{q}_2^2
			+ \lambda(q_1-q_2) \right] \d{t} , \\
	\L'[q_1,q_2,\lambda]
		&= \left[\frac{1}{2}\dot{q}_1^2 + \frac{1}{2}\dot{q}_2^2
			+ (\lambda + \alpha)(q_1-q_2) \right] \d{t} ,
\end{align}
where $\alpha$ is a constant (though in principle it could be a more
complicated function of $\lambda$ and $q_i$). Note that the difference
between the two Lagrangians, $\L'-\L = \alpha(q_1-q_2)$, is not a
boundary term, though it is proportional to the constraint equation
$q_2-q_1=0$ obtained by varying $\lambda$. It is easy to check that the
resulting Euler-Lagrange equations for either Lagrangian are equivalent
to the set $\ddot{q}_1=0$, $q_2=q_1$, $\lambda=0$. Moreover, their
symplectic currents agree as well, $\omega = \sum_i\dv\dot{q}_i\wedge\dv
q_i$ = $\omega'$. Furthermore, since we never limited the order of the
Lagrangian density, it could be of order zero (a ordinary,
non-differential variational problem). Then both $\omega=\omega'=0$ and
the critical points of $\L$ and $\L'$ coincide as long as $\dv(\L'-\L)$
vanishes at all critical points. Therefore, in the presence of
constraints on the initial data, the classes of equivalent Lagrangian
densities are larger than just those that differ by boundary terms, if
equivalence is evaluated in terms of the Euler-Lagrange equations and
the presymplectic current.

Consider the following preorder%
	\footnote{A \emph{preorder} $P$ on a set $X$ is a relation such that
	is \emph{reflexive} ($xPx$) and \emph{transitive} ($xPy$ and $yPz$ implies
	$xPz$), but in general neither symmetric nor antisymmetric. All
	partial orders and equivalence relations are preorders. The maximum
	symmetric subrelation $E$ ($xEy$ iff $xPy$ and $yPx$) is an
	equivalence relation and the quotient $X\to X/E$ projects $P$ to a
	partial order $P/E$ on $X/E$ (which is necessarily
	antisymmetric)~\cite{schr-ord}.} %
on Lagrangian densities. We say that $\L'$ contains $\L$, $\L \prec
\L'$, if the $\oo$-jet bundle submanifolds defined by the prolongations
of the corresponding Euler-Lagrange equations obey $\E^\oo_{\EL} \sse
\E^\oo_{\EL'}$ and their presymplectic current densities agree up to
a horizontally exact term when pulled back to the more restrictive
equation, $\iota_\oo^*\omega = \iota_\oo^*(\omega' + \dh\pi)$, where $\iota\colon
\E_{\EL} \sso J^\oo F$. The relation $\prec$ is clearly transitive and
reflexive and hence a preorder.

The containment relation is however not a partial order, because it is
not antisymmetric. As already discussed earlier, if $\L'-\L = \dh\B$,
then $\L\prec \L'$ and $\L'\prec \L$, while $\L\ne \L'$. On the other
hand, this preorder gives rise to the equivalence relation $\L \sim \L'$
(iff $\L\prec \L'$ and $\L'\prec \L$) and projects to an actual partial
order $[\L]\prec [\L']$ on the equivalence classes with respect to
$\sim$.

Given a PDE system $\iota\colon \E\sso J^k F$ with a compatible
presymplectic current density $\hat{\omega}$, define the set
$L_{\E,\hat{\omega}}$ to consist of equivalence classes of Lagrangian
densities $[\L]$ such that $\E^\oo \sse \E^\oo_{\EL}$ and
$\iota_\oo^*\omega = \iota_\oo^*(\hat{\omega} + \dh\pi)$. Then $L_{\E,\hat{\omega}}$ is
an upper set with respect to the $\prec$ partial order (which means that
$[\L]\in L_{\E,\hat{\omega}}$ and $[\L]\prec [\L']$ implies $[\L']\in
L_{\E,\hat{\omega}}$). After all, as should be clear from the definition
of the $\prec$ relation, if $\rho = \L + \theta$ could arise in the
construction in the preceding section (that is $\L\in
L_{\E,\hat{\omega}}$), then so could $\rho = \L' + \theta'$ (that is
$\L'\in L_{\E,\hat{\omega}}$) for any $\L'$ that contains $\L$ ($\L\prec
\L'$).

It is clear that the construction of the preceding section can yield any
element of $L_{\E,\hat{\omega}}$ as a partial solution of the inverse
problem. It is possible that a consideration of the structure of the
partial order $\prec$ as well as the decomposability and minimal
elements of $L_{\E,\hat{\omega}}$ in relation with the symbol of the PDE
system $\E^\oo$ can yield a definite solution of the inverse problem.
These questions are yet to be investigated in detail.

\section{Conclusion}\label{sec:concl}
We have shown, following a recent observation in~\cite{bhl,hydon} and a
strong analogy with previous work in~\cite{hen82,juras,at}, that a horizontally
conserved presymplectic current density is a certificate that a
subsystem of a given PDE system or a comparatively weaker system is
variational (any solution of the original PDE system also solves the
obtained variational system). If this subsystem is actually the full
system, then the multiplier inverse problem of the calculus of
variations has a positive solution.

Restricting to the context of second order ODEs that can be put in
canonical form~\eqref{eq:ode-form}, it is known that any two Lagrangians
whose Euler-Lagrange equations are equivalent to the given ODE system
$\E$ and that have equivalent symplectic current densities
$\hat{\omega}$ must differ by a boundary term \cite{hen82}. In that case,
the equivalence class of Lagrangians solving the full inverse problem of
the calculus of variations for $\E$ depends only on the characteristic
cohomology class in $[\hat{\omega}]\in H^{n-1,2}_\E(\dh)$. However, in
more general contexts (for general PDEs, or even ODEs with constraints)
a characterization of equivalent Lagrangians (those sharing the same set
of solutions and equivalent presymplectic current densities) is still
missing.

On the other hand, in Section~\ref{sec:uniq} we defined a preorder
relation $\prec$ on Lagrangian densities $\L\in\Forms^{n,0}(F)$ that
relates both this equivalence problem for Lagrangian densities and the
ambiguity in the Lagrangians produced by the construction of
Section~\ref{sec:invprob}. The precise structure of the preorder $\prec$
and the conditions on a PDE system and compatible presymplectic current
that guarantee that it is fully variational (rather than just a
subsystem thereof) remain to be investigated. It is possible that
generic methods for determining the characteristic cohomology groups of
a PDE~\cite{bg,kv} would be helpful in classifying possible
presymplectic current densities and hence variational formulations.

\section*{Acknowledgments}
The author would like to thank Urs Schreiber for helpful discussions and
Chris Rogers for pointing out Ref.~\cite{bhl} during a visit to Utrecht
in October, 2011. Also, thanks to Glenn Barnich, Peter Olver and Ian
Anderson for their encouragement.

\bibliographystyle{utphys}
\bibliography{paper-invprob}

\end{document}